\documentclass[acmsmall]{acmart}
\usepackage{amsfonts, datetime, cancel, mathtools, graphicx, amsmath, amsthm, bm, amsbsy, breqn, accents, float, listings, courier, lscape, multirow, multicol, longtable, dsfont, bbold, soul, color, tikz, scalerel, empheq, paralist, subcaption}
\usetikzlibrary{shapes, arrows}
\usetikzlibrary{arrows,decorations.markings}
\usetikzlibrary{decorations.pathmorphing}
\allowdisplaybreaks
\newcommand{\R}{\mathbb{R}}

\newcommand{\bracketRound}[1]{\left(#1\right)}
\newcommand{\bracketSquare}[1]{\left[#1\right]}

\newcommand{\prob}{p}

\newtheorem{theorem}{Theorem}

\newcommand{\compPower}{x}
\newcommand{\compPowerSet}{X}

\newcommand{\pow}{d}
\newcommand{\powSet}{D}
\newcommand{\wot}{w}
\newcommand{\wotSet}{W}
\newcommand{\stateCardinality}{n}
\newcommand{\powCardinality}{m}

\newcommand{\utilityAgent}{u}

\newcommand{\totalAgents}{N}
\newcommand{\rate}[2]{R(#1,#2)}

\newcommand{\fairness}{f}

\definecolor{lavander}{cmyk}{0,0.48,0,0}
\definecolor{violet}{cmyk}{0.79,0.88,0,0}
\definecolor{burntorange}{cmyk}{0,0.52,1,0}
\definecolor{visible}{rgb}{1,1,1}
\definecolor{invisible}{rgb}{0.8,0.8,0.8}
\tikzstyle{transactions}=[draw,rectangle,white, color=black, text=black,minimum width=10pt,minimum height=10pt]
\tikzstyle{blk_diag_square}=[draw,rectangle, white, color=black, text=black, minimum width=10pt, minimum height=10pt, align=center, inner sep=7, rounded corners=4]

\settopmatter{printacmref=false}
\AtBeginDocument{%
	\providecommand\BibTeX{{%
			\normalfont B\kern-0.5em{\scshape i\kern-0.25em b}\kern-0.8em\TeX}}}



\begin{document}
	
	\title{Controlling Transaction Rate in Tangle Ledger: A Principal Agent Problem Approach}
	
	\author{Anurag Gupta}
	\email{ag2589@cornell.edu}
	\author{Vikram Krishnamurthy}
	\authornotemark[1]
	\email{vikramk@cornell.edu}
	\affiliation{%
		\institution{Cornell University}
		\streetaddress{136, Hoy Road}
		\city{Ithaca}
		\state{New York}
		\country{USA}
		\postcode{14853}
	}

	\begin{abstract}
		Tangle is a distributed ledger technology that stores data as a directed acyclic graph (DAG). Unlike blockchain, Tangle does not require dedicated miners for its operation; this makes Tangle suitable for Internet of Things (IoT) applications. Distributed ledgers have a built-in transaction rate control mechanism to prevent congestion and spamming; this is typically achieved by increasing or decreasing the proof of work (PoW) difficulty level based on the number of users. Unfortunately, this simplistic mechanism gives an unfair advantage to users with high computing power. This paper proposes a principal-agent problem (PAP) framework from microeconomics to control the transaction rate in Tangle. With users as agents and the transaction rate controller as the principal, we design a truth-telling mechanism to assign PoW difficulty levels to agents as a function of their computing power. The solution of the PAP is achieved by compensating a higher PoW difficulty level with a larger weight/reputation for the transaction. The mechanism has two benefits, (1) the security of Tangle is increased as agents are incentivized to perform difficult PoW, and (2) the rate of new transactions is moderated in Tangle. The solution of PAP is obtained by solving a mixed-integer optimization problem. We show that the optimal solution of the PAP increases with the computing power of agents. The structural results reduce the search space of the mixed-integer program and enable efficient computation of the optimal mechanism. Finally, via numerical examples, we illustrate the transaction rate control mechanism and study its impact on the dynamics of Tangle.
	\end{abstract}
	
	
	
	\begin{CCSXML}
		<ccs2012>
		<concept>
		<concept_id>10002951.10003152.10003517.10003519</concept_id>
		<concept_desc>Information systems~Distributed storage</concept_desc>
		<concept_significance>500</concept_significance>
		</concept>
		<concept>
		<concept_id>10003033.10003039.10003040</concept_id>
		<concept_desc>Networks~Network protocol design</concept_desc>
		<concept_significance>500</concept_significance>
		</concept>
		</ccs2012>
	\end{CCSXML}

	\ccsdesc[500]{Information systems~Distributed storage}
	\ccsdesc[500]{Networks~Network protocol design}
	\keywords{Distributed ledger, directed acyclic graph, Tangle, transaction rate control, proof of work (PoW), weight of transaction (WoT), principal-agent problem (PAP), mixed-integer optimization, linear program.}

	\maketitle
	
	\section{Introduction}
	\label{sec:introduction}
	Tangle was created by the IOTA foundation to enable a scalable distributed ledger technology for IoT applications with no transaction fees~\cite{2018:SP}. Tangle is an example of a directed acyclic graph (DAG) based distributed ledgers~\cite{2018-FB-et-al},~\cite{2020-YL-et-al},~\cite{2020:SP-et-al}. Tangle is a generalization of linear blockchain technology: it stores data as a DAG in contrast to the linear data structure of blockchain. The DAG structure of Tangle allows multiple transactions to be added to the ledger simultaneously. Hence, the throughput of Tangle is significantly higher than blockchain. The modified data structure also supports higher scalability and decreased mining cost~\cite{2018:SP}, which are essential for Internet of Things (IoT)~\cite{2014:LX-et-al},~\cite{2019-SK-et-al} applications. This is because IoT devices have limited computing power and can not afford high transaction fees associated with mining in blockchain. Therefore, Tangle is used in IoT applications such as automating real-time trade and exchange of renewable energy amongst neighbourhoods~\cite{2020-MZ-et-al}.
	
	There are two important elements of Tangle: proof of work (PoW) and weight of transactions (WoT)~\cite{2018:SP}. PoW requires users to solve a hash puzzle to add a transaction to the distributed ledger. Increasing the PoW difficulty level increases the computational complexity of the hash puzzle; this is used to control the rate of new transactions and increase the security of Tangle. WoT is the reputation assigned to the transaction. In Tangle, users must select two other transactions randomly and approve them to add a new transaction: if the two transactions do not conflict, they are approved. The probability that a transaction is selected for approval is proportional to its WoT. Therefore, users prefer a higher WoT for their transactions. In the current implementation of Tangle, WoT is a fixed function of PoW~\cite{2018:SP}; this makes Tangle susceptible to being dominated by users with powerful computational resources. Users with high computational resources can add a very high number of transactions with a high WoT. A natural question is \textit{how to make Tangle fair\footnote{By fair, we mean that every agent, irrespective of its computing power, can add transactions into Tangle at a low cost. Also, the mechanism should compensate agents for completing a difficult PoW. Our transaction rate control mechanism compensates agents' PoW with WoT. In Sec.\ref{sec:simulations}, we consider a simple fairness measure to trade off PoW and WoT. It is of interest in future work to incorporate schemes such as proportional,  max-min and social welfare fairness.} to all users?}
	
	In this paper, we utilize PoW difficulty level in conjunction with WoT to control the rate of new transactions in Tangle. Specifically, we formulate the transaction rate control problem for Tangle as a \textit{principal-agent problem (PAP) with adverse selection}~\cite{1995:AM}. The PAP has been studied widely in microeconomics to design a contract between two strategic players with misaligned utilities. Examples include labor contracts~\cite{2005:GM}, insurance market~\cite{2003:MH}, and  differential privacy~\cite{2014:CD-AR}. There are two types of PAP~\cite{1995:AM}: moral hazard and adversarial selection. We restrict our attention to the PAP with adverse selection as the underlying information asymmetry\footnote{In the PAP with adverse selection, the principal cannot observe the state of agents, and hence, it has to incentivize agents to reveal their state truthfully. The principal then assigns the effort level and compensation based on the revealed information to maximize its own utility.} is similar to that of the transaction rate control problem in Tangle. In IoT applications, heterogeneous IoT devices with different computing power are agents, and the transaction rate controller is the principal. Agents want to add new transactions in Tangle at the maximum possible rate. On the other hand, the principal wants to control the rate of new transactions to reduce network congestion and spamming. 
	As the principal cannot observe the computing power of agents, it also has to incentivize agents to reveal their computing power truthfully. A truth-telling mechansim ensures that agents with high computing power solve a difficult PoW. Incentivizing agents to solve a difficult PoW has two benefits: (1) it moderates the rate of new transactions in Tangle (2) it is difficult for an adversary to tamper with transactions in Tangle, making Tangle more secure. We use a simple fairness measure for transaction rate control: it trades off the rate of new transactions with compensation (WoT) given to the agents for solving PoW. To control the transaction rate, the principal assigns a PoW difficulty level to each agent based on their revealed computing power. To ensure the mechanism is truth-telling, the principal compensates agents' PoW using an appropriate WoT.
	
	To summarize, the information asymmetry between the transaction rate controller and  IoT devices motivates PAP with adverse selection as a suitable mechanism for controlling the transaction rate in Tangle. It yields a tractable linear program~\cite{2015:DB} with useful underlying structures that the principal can exploit to speed up the computation.
	
	\subsection*{Related Work}
	Several works study the transaction rate control problem for distributed ledgers.~\cite{2016-DK} formulates a mechanism to control the PoW difficulty level for blockchain that ensures stable average block times. \cite{2017-DM-et-al} proposes a difficulty adjustment algorithm for blockchain to disincentivize the miners from coin-hopping attacks: a malicious miner increases his mining profits while at the same time increasing the average delay between blocks. The transaction rate problem has also been studied from the agent's perspective, e.g.,  agents optimizing their contribution of computing power to the mining process~\cite{2020:EB-et-al}.
	
	For the DAG-based distributed ledger,~\cite{2019-LV-et-al} proposed an adaptive rate control for Tangle. Their scheme increases or decreases the difficulty level of the PoW depending on the historical transaction rate of an agent. \cite{2021:MJ-et-al} uses a utility maximization approach to control the rate of transactions in Tangle using a suitable choice of network performance metric. Their model assumes that the computing power of the agent is known to the transaction rate regulator. \cite{2021:AC-et-al} borrows an idea from wireless networks to control the rate of transactions using an access control scheme. Congestion control has also been studied in the context of wireless communication. For example, \cite{2001:XL-et-al}, \cite{2006:AF-et-al} discuss optimal resource sharing among multiple users for a high quality of service.
	
	To the best of our knowledge, a PAP-based approach to studying the transaction rate control problem in Tangle has not been explored in the literature. The PAP framework allows us to model the information asymmetry between the distributed ledger's users and the transaction rate controller. It yields a tractable mixed-integer optimization problem that the principal can decompose into multiple linear programs. The PAP also allows us to analyze the structure of decision variables; this is beneficial in reducing the search space dimension and decreasing the computation cost. 
 
 Finally, our proposed transaction rate controller is compatible with the IEEE standard 2144.1-2020~\cite{2021:IEEE}. This standard specifies a framework for blockchain-based data management for IoT applications. The IEEE standard  2144.1-2020 identifies the following stakeholders for a typical IoT data collaborative ecosystem: data owners, data consumers, service providers, regulators/policymakers and other stakeholders. Regarding IEEE  2144.1-2020, our proposed transaction rate controller can be considered a regulator/policy maker. The regulator has higher computing power than IoT devices; therefore, it can take the role of transaction rate controller in Tangle. Higher computing power is required to solve our proposed transaction rate control mechanism; using an independent regulator as a transaction rate controller reduces the computational cost of IoT devices.
	
	\subsection*{Organization and Main Results}
	Sec.\ref{sec:problem-statement} describes the Tangle protocol and the PAP approach to solve the transaction rate control problem in Tangle. The PAP is a mixed-integer program: the PoW difficulty level takes values in a finite set, whereas the WoT takes values from the subset of real numbers. We show that for a fixed choice of PoW difficulty level, the PAP for transaction rate control in Tangle is a linear program. This facilitates efficient computation of the optimal solution using standard linear program solvers.
	
	Sec.\ref{sec:structural-results} exploits the structure of the PAP to characterize the decision variables. Our first structural result shows that the optimal PoW difficulty level increases with computing power. The second result shows that the optimal WoT assigned to the agents increases with computing power. The principal can exploit the results to reduce the search space for the transaction rate control problem and decrease the computation cost.
	
	Sec.\ref{sec:simulations} illustrates the application of PAP for the transaction rate control problem in Tangle using numerical examples. We also apply the structural result from Sec.\ref{sec:structural-results} to reduce the dimension of the search space of the decision variables. We incorporate our proposed transaction rate control mechanism in  Tangle and simulate its dynamics. This yields insight into the impact of the transaction rate mechanism on the average approval time of a tip transaction. Finally, we compare the proposed transaction control mechanism with the fixed transaction control scheme in~\cite{2018:SP}, which allocates WoT as a linear function of PoW.
	
	\section{Transaction Rate Control Problem in Tangle}
	\label{sec:problem-statement}
	\begin{figure*}
\centering
\begin{subfigure}[b]{0.2\linewidth}
\begin{center}
\scalebox{0.4}{
\begin{tikzpicture}[auto, thick]
    \foreach \place/\name in {{(-3,0)/0}, {(0,3)/1}, {(0.5,1)/2}, {(0,-1)/3}, {(0.5,-3)/4}}
    \node[transactions, fill=visible] (\name) at \place {\huge \name};
    \draw[decoration={markings,mark=at position 1 with
    {\arrow[scale=2,>=stealth]{>}}},postaction={decorate}] (1) to[out=200,in=70] (0);
    \draw[decoration={markings,mark=at position 1 with
    {\arrow[scale=2,>=stealth]{>}}},postaction={decorate}] (1) to[out=210,in=50] (0);
    
    \draw[decoration={markings,mark=at position 1 with
    {\arrow[scale=2,>=stealth]{>}}},postaction={decorate}] (2) to[out=180,in=30] (0);
    \draw[decoration={markings,mark=at position 1 with
    {\arrow[scale=2,>=stealth]{>}}},postaction={decorate}] (2) to[out=200,in=0] (0);
    
    \draw[decoration={markings,mark=at position 1 with
    {\arrow[scale=2,>=stealth]{>}}},postaction={decorate}] (3) to[out=170,in=330]  (0);
    \draw[decoration={markings,mark=at position 1 with
    {\arrow[scale=2,>=stealth]{>}}},postaction={decorate}] (3) to[out=150,in=350] (0);

    \draw[decoration={markings,mark=at position 1 with
    {\arrow[scale=2,>=stealth]{>}}},postaction={decorate}] (4) to[out=170,in=300] (0);
    \draw[decoration={markings,mark=at position 1 with
    {\arrow[scale=2,>=stealth]{>}}},postaction={decorate}] (4) to (0);
    \foreach \source/\dest in {}
    \draw[decoration={markings,mark=at position 1 with
{\arrow[scale=2,>=stealth]{>}}},postaction={decorate}] (\dest) -- (\source);
    
    \draw[dotted] (-0.5,-4) rectangle (1.3,4);
    \node at (0.4,-5) {\huge $t=1$};
\end{tikzpicture}}
\caption{$G_1(V_1,E_1)$}
\end{center}
\end{subfigure}
\begin{subfigure}[b]{0.3\linewidth}
\centering
\begin{center}
\scalebox{0.4}{
\begin{tikzpicture}[auto, thick]
    \foreach \place/\name in {{(-3,0)/0}, {(0,3)/1}, {(0.5,1)/2}, {(0,-1)/3}, {(0.5,-3)/4}, {(3,4)/5},{(4,2)/6}, {(3,0)/7}, {(3.5,-2.8)/8}}
    \node[transactions, fill=visible] (\name) at \place {\huge \name};
    \draw[decoration={markings,mark=at position 1 with
    {\arrow[scale=2,>=stealth]{>}}},postaction={decorate}] (1) to[out=200,in=75] (0);
    \draw[decoration={markings,mark=at position 1 with
    {\arrow[scale=2,>=stealth]{>}}},postaction={decorate}] (1) to[out=210,in=50] (0);
    
    \draw[decoration={markings,mark=at position 1 with
    {\arrow[scale=2,>=stealth]{>}}},postaction={decorate}] (2) to[out=180,in=30] (0);
    \draw[decoration={markings,mark=at position 1 with
    {\arrow[scale=2,>=stealth]{>}}},postaction={decorate}] (2) to[out=200,in=0] (0);
    
    \draw[decoration={markings,mark=at position 1 with
    {\arrow[scale=2,>=stealth]{>}}},postaction={decorate}] (3) to[out=170,in=330]  (0);
    \draw[decoration={markings,mark=at position 1 with
    {\arrow[scale=2,>=stealth]{>}}},postaction={decorate}] (3) to[out=150,in=350] (0);

    \draw[decoration={markings,mark=at position 1 with
    {\arrow[scale=2,>=stealth]{>}}},postaction={decorate}] (4) to[out=170,in=300] (0);
    \draw[decoration={markings,mark=at position 1 with
    {\arrow[scale=2,>=stealth]{>}}},postaction={decorate}] (4) to (0);
    
    \foreach \source/\dest in {1/5, 1/6, 2/6, 3/7, 3/8, 2/8}
    \draw[decoration={markings,mark=at position 1 with
{\arrow[scale=2,>=stealth]{>}}},postaction={decorate}] (\dest) -- (\source);
    
    \draw[dotted] (-0.5,-4) rectangle (1.3,4);
    \node at (0.4,-5) {\huge $t=1$};
    \draw[dotted] (2.3,-4) rectangle (4.5,5);
    \node at (3.4,-5) {\huge $t=2$};
\end{tikzpicture}
}
\caption{$G_2(V_2,E_2)$}
\end{center}
\end{subfigure}
\begin{subfigure}[b]{0.4\linewidth}
\begin{center}
\scalebox{0.4}{
\begin{tikzpicture}[auto, thick]
    \foreach \place/\name in {{(-3,0)/0}, {(0,3)/1}, {(0.5,1)/2}, {(0,-1)/3}, {(0.5,-3)/4}, {(3,4)/5},{(4,2)/6}, {(3,0)/7}, {(3.5,-2.8)/8}, {(6.2,2.3)/9}, {(7,1)/10}, {(6.5,-1)/11}, {(6.7,-3.5)/12}}
    \node[transactions, fill=visible] (\name) at \place {\huge \name};
    \draw[decoration={markings,mark=at position 1 with
    {\arrow[scale=2,>=stealth]{>}}},postaction={decorate}] (1) to[out=200,in=75] (0);
    \draw[decoration={markings,mark=at position 1 with
    {\arrow[scale=2,>=stealth]{>}}},postaction={decorate}] (1) to[out=210,in=50] (0);
    
    \draw[decoration={markings,mark=at position 1 with
    {\arrow[scale=2,>=stealth]{>}}},postaction={decorate}] (2) to[out=180,in=30] (0);
    \draw[decoration={markings,mark=at position 1 with
    {\arrow[scale=2,>=stealth]{>}}},postaction={decorate}] (2) to[out=200,in=0] (0);
    
    \draw[decoration={markings,mark=at position 1 with
    {\arrow[scale=2,>=stealth]{>}}},postaction={decorate}] (3) to[out=170,in=330]  (0);
    \draw[decoration={markings,mark=at position 1 with
    {\arrow[scale=2,>=stealth]{>}}},postaction={decorate}] (3) to[out=150,in=350] (0);

    \draw[decoration={markings,mark=at position 1 with
    {\arrow[scale=2,>=stealth]{>}}},postaction={decorate}] (4) to[out=170,in=300] (0);
    \draw[decoration={markings,mark=at position 1 with
    {\arrow[scale=2,>=stealth]{>}}},postaction={decorate}] (4) to (0);
    \foreach \source/\dest in {1/5, 1/6, 2/6, 3/7, 3/8, 2/8, 7/9, 8/9, 6/10, 7/10, 4/11, 8/11, 7/12, 8/12}
    \draw[decoration={markings,mark=at position 1 with
{\arrow[scale=2,>=stealth]{>}}},postaction={decorate}] (\dest) -- (\source);
    
    \draw[dotted] (-0.5,-4) rectangle (1.3,4);
    \node at (0.4,-5) {\huge $t=1$};
    \draw[dotted] (2.3,-4) rectangle (4.5,5);
    \node at (3.4,-5) {\huge $t=2$};
    \draw[dotted] (5.5,-4) rectangle (7.7,3.5);
    \node at (6.6,-5) {\huge $t=3$};
\end{tikzpicture}
}
\caption{$G_3(V_3,E_3)$}
\end{center}
\end{subfigure}

\begin{subfigure}[b]{0.4\linewidth}
\begin{center}
\scalebox{0.4}{
\begin{tikzpicture}[auto, thick]
    \foreach \place/\name in {{(-3,0)/0}, {(0,3)/1}, {(0.5,1)/2}, {(0,-1)/3}, {(0.5,-3)/4}, {(3,4)/5},{(4,2)/6}, {(3,0)/7}, {(3.5,-2.8)/8}, {(6.2,2.3)/9}, {(7,1)/10}, {(6.5,-1)/11}, {(6.7,-3.5)/12}, {(9.5,4)/13}, {(10,2)/14}, {(9.5,-1)/15}, {(9.7,-3)/16}}
    \node[transactions, fill=visible] (\name) at \place {\huge \name};
    \draw[decoration={markings,mark=at position 1 with
    {\arrow[scale=2,>=stealth]{>}}},postaction={decorate}] (1) to[out=200,in=75] (0);
    \draw[decoration={markings,mark=at position 1 with
    {\arrow[scale=2,>=stealth]{>}}},postaction={decorate}] (1) to[out=210,in=50] (0);
    
    \draw[decoration={markings,mark=at position 1 with
    {\arrow[scale=2,>=stealth]{>}}},postaction={decorate}] (2) to[out=180,in=30] (0);
    \draw[decoration={markings,mark=at position 1 with
    {\arrow[scale=2,>=stealth]{>}}},postaction={decorate}] (2) to[out=200,in=0] (0);
    
    \draw[decoration={markings,mark=at position 1 with
    {\arrow[scale=2,>=stealth]{>}}},postaction={decorate}] (3) to[out=170,in=330]  (0);
    \draw[decoration={markings,mark=at position 1 with
    {\arrow[scale=2,>=stealth]{>}}},postaction={decorate}] (3) to[out=150,in=350] (0);

    \draw[decoration={markings,mark=at position 1 with
    {\arrow[scale=2,>=stealth]{>}}},postaction={decorate}] (4) to[out=170,in=300] (0);
    \draw[decoration={markings,mark=at position 1 with
    {\arrow[scale=2,>=stealth]{>}}},postaction={decorate}] (4) to (0);
    \foreach \source/\dest in {1/5, 1/6, 2/6, 3/7, 3/8, 2/8, 7/9, 8/9, 6/10, 7/10, 4/11, 8/11, 7/12, 8/12, 5/13, 10/13, 5/14, 10/14, 11/16, 12/16}
    \draw[decoration={markings,mark=at position 1 with
{\arrow[scale=2,>=stealth]{>}}},postaction={decorate}] (\dest) -- (\source);
    \draw[decoration={markings,mark=at position 1 with
{\arrow[scale=2,>=stealth]{>}}},postaction={decorate}] (15) to[out=170,in=290] (10);
    \draw[decoration={markings,mark=at position 1 with
{\arrow[scale=2,>=stealth]{>}}},postaction={decorate}] (15) to[out=120,in=340] (10);

    \draw[dotted] (-0.5,-4) rectangle (1.3,4);
    \node at (0.4,-5) {\huge $t=1$};
    \draw[dotted] (2.3,-4) rectangle (4.5,5);
    \node at (3.4,-5) {\huge $t=2$};
    \draw[dotted] (5.5,-4) rectangle (7.7,3.5);
    \node at (6.6,-5) {\huge $t=3$};
    \draw[dotted] (8.5,-4) rectangle (10.7,5);
    \node at (9.6,-5) {\huge $t=4$};

\end{tikzpicture}
}
\caption{$G_4(V_4,E_4)$}
\end{center}
\end{subfigure}
\caption{Protocol for tip transaction selection and approval in Tangle. Consider a dynamic DAG $G_t(V_t,E_t)$ representing Tangle at time $t$. Each node of $G_t(V_t,E_t)$ represents a transaction. Node 0 is the genesis node at $t=0$. At each time $t$, new transactions join the ledger by randomly selecting two tip transactions and approving them. This corresponds to formation of a directed edge. (a) Node $1,2,3,4$ join the ledger at $t=1$. Tip node for $G_0=(V_0,E_0)$ is $0$. (b) Node $5,6,7,8$ join the  at $t=2$. Tip nodes for $G_1=(V_1,E_1)$ are $1,2,3,4$. (c) Node $9,10,11,12$ join the ledger at $t=3$. Tip nodes for $G_2=(V_2,E_2)$ are $4,5,6,7,8$. (d) Node $13,14,15,16$ join the ledger at $t=4$. Tip nodes for $G_3=(V_3,E_3)$ are $5,9,10,11,12$. } 
\label{fig:tangle-evolution}
\end{figure*}
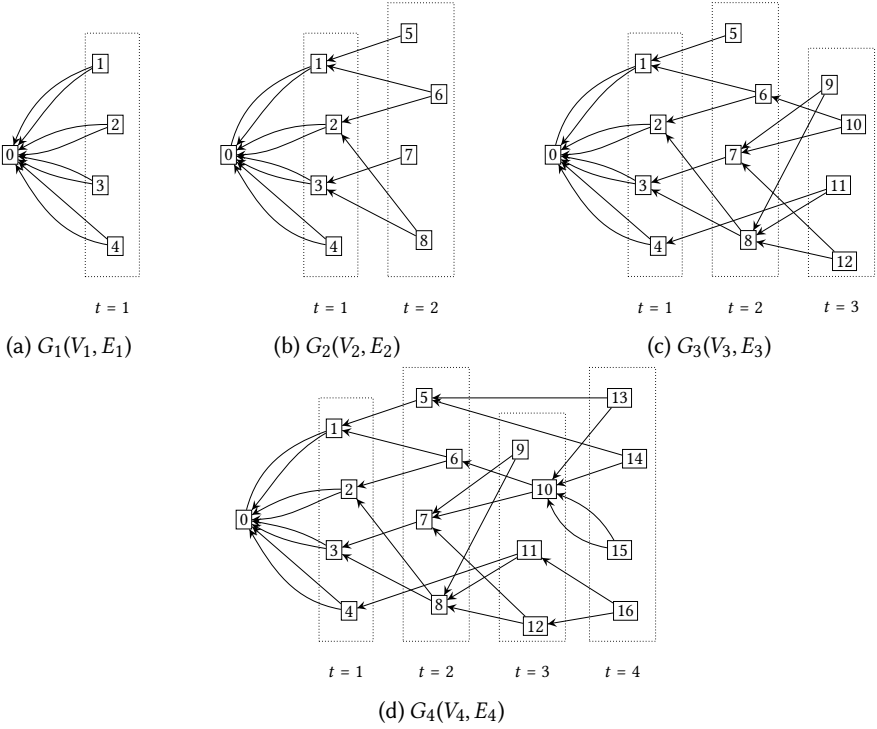
	
	This section describes the PAP formulation for controlling the rate of new transactions in Tangle. Sec.\ref{sec:tangle-protocol} describes the Tangle protocol, and Sec.\ref{sec:pap} describes the PAP formulation for the transaction rate control problem in Tangle.
	\subsection{Tangle protocol}
	\label{sec:tangle-protocol}
	
	Tangle can be abstracted as a time-evolving DAG. Fig.~\ref{fig:tangle-evolution} shows an example of the time evolution of  Tangle. 
	
	Consider a time evolving DAG~$G_t(V_t,E_t)$ representing  Tangle at time~$t$
	where $t\in\mathbb{Z}^+$ denotes discrete time.
	Each node in the graph corresponds to a transaction/record stored on Tangle. At $t=0$, $G_0(V_0,E_0)$ denotes the \textit{genesis} graph. We assume $|V_0|=1, |E_0|=0$. At each discrete time instant  $t$, new transactions are added in Tangle by agents (IoT devices). To join Tangle, each new transaction at time $t$ chooses two \textit{tip transactions} from $G_{t-1}(V_{t-1},E_{t-1})$ randomly, and \textit{approves} them. The two tip nodes are selected independently with repetition. Here, a tip transaction means a transaction with no incoming edges. Approving a tip transaction in Tangle means verifying that the transaction is valid, i.e., there is no double-spending. If the two randomly chosen transactions are valid, the new transaction forms a directed edge to each randomly chosen tip transaction. We will assume that the chosen tip transactions are always valid, as handling double-spending attacks are beyond the scope of this study. To summarize, every new transaction forms outgoing edges from itself to two randomly chosen tip transactions to join Tangle. 
	
	When adding a new transaction, agents must show PoW by solving a hash puzzle. This is to prevent spamming\footnote{Solving the  hash puzzle for PoW takes a finite time. Hence, an agent's rate of new transactions is restricted through PoW.}. Has puzzles also ensure that tampering\footnote{To tamper with a transaction, a malicious entity would need to re-solve the hash puzzle for that transaction and also for all future transactions that approve it. This is because a solution of the hash puzzle of a transaction depends on the hash of the previous transactions} a transaction in Tangle is difficult~\cite{2017:FH-et-al}. Based on the PoW difficulty level, the transaction is assigned a weight. During random selection of the two tip transactions, the probability of choosing a particular tip transaction is directly proportional to its weight. Hence, agents optimally choose the PoW difficulty level based on their preference for WoT. In the current implementation of Tangle~\cite{2018:SP}, the relation between the WoT and its PoW difficulty level is a fixed linear function. We propose a modified mechanism to assign the PoW difficulty level and corresponding WoT based on the computing power of an agent. The mechanism is also truth-telling, i.e., agents truthfully reveal their computing power to the principal.
 
	\textit{Remarks: }Approving tip transactions takes finite time and leads to delay, i.e., transaction entering Tangle at time $t$ is available as a tip transaction at time $t+h,\;h\geq 1$. In this paper, we assume a constant delay $h=1$ for simplicity. The choice of $h$ does not affect the transaction rate control problem. It only affects the dynamics of Tangle.
	
	\subsection{PAP approach to the transaction rate control problem in Tangle}
	\label{sec:pap}
	Our PAP formulation of the transaction rate control problem in Tangle is constructed as follows:
	\begin{align}
	\text{PAP}&=\begin{cases}\text{Principal} &=\text{Transaction rate controller}\\
	\text{Agents}&=\text{Users (devices) adding transactions in Tangle}
	\end{cases}
	\end{align}
        \begin{figure}
		\begin{center}
			\begin{tikzpicture}
			\tikzstyle{arrow} = [draw, -latex']
			\tikzstyle{user} = [draw, rectangle, text width=1.5cm, align=center];
			\tikzstyle{pow} = [draw, rectangle, text width=2cm, align=center]
			\node (ledger) at (-1.2,0) [draw, circle, text width=1.5cm, align=center] {Tangle};
			\node (controller) at (2.5,0) [draw, rectangle, text width = 2cm, align=center] {Transaction Rate Controller};
			\draw[arrow] (ledger) -- node[above] {$(p(\compPower))_{\compPower\in\compPowerSet}$} node[below]{$\compPower,\totalAgents$} (controller);
			\foreach \x in {1,2}
			{
				\node (pow\x) at (2.5,6-4*\x) [pow] {$\wot(\state_\x),\pow(\state_\x)$};
				\node (user\x) at (5,6-4*\x) [user] {Type $\state_\x$ Agents};
				\draw [arrow] (user\x) -- (pow\x);
				\draw [arrow] (pow\x) -|(ledger);
				\draw [arrow] (controller) -- (pow\x);
			}
			\end{tikzpicture}
		\end{center}
		\caption{Block diagram for our proposed transaction rate control mechanism in Tangle. For simplicity, the block diagram illustrates the case when the different types of agents $\compPowerSet$ are equal to $2$. Here, agents' type refers to their computing power. The transaction rate controller block receives the fraction of different types of agents $(p(\compPower))_{\compPower\in\compPowerSet},\compPower$, total agents $\totalAgents$ from Tangle as an input. The controller then assigns a difficulty level of PoW $\pow(\compPower)$ to each agent depending on its type $\compPower\in\compPowerSet$. To compensate the PoW, new transactions created by the agent are given a weight $\wot(\compPower)$. New transactions are part of the set of tip transactions. Having a higher weight increases the chance of a tip transaction being selected for approval by other transactions.}
		\label{fig:block_diagram}
	\end{figure}
	Consider a time-evolving Tangle where multiple agents (IoT devices) add new transactions into the distributed ledger as shown in Fig.~\ref{fig:block_diagram}. Agents are heterogeneous, i.e., agents have different computing power.
	
	Let $\compPower\in \compPowerSet:=\{1,2,\ldots,\stateCardinality\}$ denote the computing power of an agent. Let $\totalAgents$ denote the number of agents, and let $\prob(\compPower)$ denote the fraction of agents having computing power $\compPower$. We define the probability vector $\prob\in\R^\stateCardinality$ as:
    \begin{align}
        \label{eq:prob-vector}
        \prob = [\prob(1), \prob(2), \ldots, \prob(\stateCardinality)]
    \end{align}
    To add a new transaction in Tangle, agents have to satisfy the PoW requirement by solving a hash puzzle: search for a nonce~\cite{2008:SN} that results in hash code starting with a desired number of zeros. The more difficult the puzzle, the longer it takes for an agent to solve it. Hence, to control the rate of new transactions in Tangle, the principal (transaction rate controller) adjusts the PoW difficulty level for each agent. Let $\pow(\compPower)\in\powSet:=\{1,2,\ldots,\powCardinality\}$ denotes the PoW difficulty level assigned to an agent with computing power $\compPower$. We define the PoW difficulty vector $\pow\in\powSet^{\stateCardinality}$ as:
	\begin{align}
	\label{eq:effort-vector}
	\pow = [\pow(1), \pow(2), \ldots, \pow(\stateCardinality)]
	\end{align}
	The PoW difficulty level $\pow(\compPower)$ corresponds to the number of zeros required at the beginning of the hash code. Higher difficulty decreases the rate at which an agent can add new transactions to the distributed ledger. To compensate for the agent's PoW, the principal assigns a WoT $\wot(\compPower)\in\wotSet:=[1,\infty)$ to the transaction added by an agent with computing power~$\compPower$. We define the WoT vector $\wot\in\wotSet^{\stateCardinality}$ as:
    \begin{align}
        \label{eq:wot-vector}
        \wot = [\wot(1), \wot(2),\ldots, \wot(\stateCardinality)]
    \end{align}
    Every new transaction must approve two existing tip transactions (transactions with no incoming edges) to join Tangle. In the random tip selection strategy~\cite{2018:SP}, the two tip transactions for approval are chosen randomly with a probability proportional to the weight of a tip transaction. Hence, the higher the weight of a tip transaction, the faster it gets selected for approval by others on average.
	
	\subsection*{Utility function for IoT devices (agents)} 
	We consider the following general structure of a  utility function for an agent with computing power~$\compPower$:
	\begin{align}
	\label{eq:agent-utility-general}
	\utilityAgent(\wot(\compPower),\pow(\compPower),\compPower) := \wot(\compPower)-g\bracketRound{\pow(\compPower),\compPower}
	\end{align}
	Here, $\pow, \wot$ are defined in \eqref{eq:effort-vector} and \eqref{eq:wot-vector}, respectively. 
 If the WoT $\wot(\compPower)$ is large, the tip transaction gets quickly approved by new transactions. Hence, agents prefer a higher WoT $\wot(\compPower)$. Here, the increasing function $g(\cdot)$ models the associated cost for an agent to satisfy the PoW requirement. If the PoW difficulty level is high, the agent has to spend large computing power to satisfy the PoW requirement. Hence, agents prefer low effort $\pow(\compPower)$. Moreover, we assume that $g_\pow(\cdot)$ is decreasing in $\compPower$. This models that the marginal cost due to increase in PoW difficulty level decreases with computing power $\compPower$. 
 
	
	\subsection*{Formulation of a PAP for the transaction rate control in Tangle} In our formulation, we assume that the computing power of agents is known to the principal. Still, the computing power of agents is unobserved\footnote{The computing power of an agent is typically estimated by the number of transactions added by an agent in the recent history~\cite{2020:SP-et-al}. It may not be possible to estimate the computing power for IoT applications. This is because IoT devices are not dedicated miners. Hence, they need not add new transactions at the maximum possible rate; IoT devices could be switching between multiple tasks or could be in standby mode. 
		Our approach is for the transaction rate controller (principal) to incentivize agents to report their computing power truthfully. Truth-telling is ensured through incentive constraints~\eqref{eq:pap-general-incentive-constraint} of the PAP.} by the principal (transaction rate controller). Therefore, the principal has to design a truth-telling mechanism that maximizes its utility. To control the rate of new transactions, the principal solves the PAP~\eqref{eq:pap-general} to assign PoW difficulty levels~$\pow(\compPower)$ to different agents, and WoT~$\wot(\compPower)$ for the transactions added by them into Tangle. The incentive constraints ensure that agents truthfully choose the PoW assigned for their computing power~$\compPower$.
	
	The PAP for transaction rate control is the following constrained optimization problem:

        \begin{subequations}
		\label{eq:pap-general}
		\begin{align}
		\label{eq:pap-general-objective}
		&\min_{\substack{\pow,\wot,\\\forall \compPower\in \compPowerSet}} \fairness(\compPower,\pow,\wot,\prob,\totalAgents)\\
		\label{eq:pap-general-incentive-constraint}
		\text{s.t.}\quad &\compPower=\arg\max_{{\bar{\compPower}\in \compPowerSet}} \utilityAgent(\wot(\compPower),\pow(\compPower),\compPower),\,\forall \compPower\in \compPowerSet\\
		\label{eq:pap-general-participation-constraint}
		&\utilityAgent(\wot(\compPower),\pow(\compPower),\compPower)\geq u_0,\forall \compPower\in \compPowerSet
		\end{align}
	\end{subequations}
        Here, $\compPower$ is the computing power of an agent. $\prob,\pow,\wot$ are defined in \eqref{eq:prob-vector}, \eqref{eq:effort-vector} and \eqref{eq:wot-vector}, respectively. $\totalAgents$ is the total number of agents. $\utilityAgent_0$ is the base utility level of agents; if the utility is above $\utilityAgent_0$, agents are willing to participate in the distributed ledger; otherwise, agents would opt out, i.e., they would not use Tangle to store their transactions. $\fairness(\cdot)$ is a fairness function used by the principal for transaction rate control. $\utilityAgent(\cdot)$ is defined in \eqref{eq:agent-utility-general}. \eqref{eq:pap-general-incentive-constraint} is known as incentive constraint for the transaction rate control problem; it ensures that agents are truthful. \eqref{eq:pap-general-participation-constraint} is known as participation constraint; it guarantees a base utility level for all agents. 

        \subsection*{Structure of transaction rate control problem in Tangle}
        The PAP~\eqref{eq:pap-general} is a mixed-integer optimization problem. It has useful structure: the optimal solution of the PAP~\eqref{eq:pap-general} is increasing in computing power $\compPower$; we would discuss these results in Sec.\ref{sec:structural-results}. The structural result reduces the computation cost for solving the transaction rate control problem. Moreover, we derive the structural results only using the incentive constraints~\eqref{eq:pap-general-incentive-constraint}, which are a function of the utility of agents~\eqref{eq:agent-utility-general}. This allows the principal to choose different fairness measures $f$ in the PAP~\eqref{eq:pap-general} for controlling the transaction rate in Tangle. Thus the results in this paper apply to a large class of fairness measures for the transaction rate control problem in Tangle.

    \subsection*{Implementation of the transaction rate control problem in Tangle}
    The PAP~\eqref{eq:pap-general} is a mixed-integer optimization problem. The principal first solves the PAP~\eqref{eq:pap-general} to obtain the optimal WoT $\wot^*(\compPower)$ for each possible choice of effort $\pow\in\powSet^{\stateCardinality}$ (defined in~\eqref{eq:effort-vector}). For a fixed effort $\pow$, the PAP is a continuous optimization problem; it can be solved efficiently using standard solvers. Our model is suitable for consortium distributed ledgers \cite{2018:OD-et-al}: a hybrid of public and private distributed ledgers. This is because it can be computationally intensive for IoT devices to solve the PAP~\eqref{eq:pap-general}. In a consortium distributed ledger, a single private agent can be the transaction rate controller (principal). Here, the private agent could be the secure facilitator of the distributed ledger with higher computing power\footnote{Our proposed transaction rate controller is compatible with the IEEE standard 2144.1-2020~\cite{2021:IEEE}: a framework for blockchain-based data management for IoT applications. The IEEE standard  2144.1-2020 identifies the following stakeholders for a typical IoT data collaborative ecosystem: data owners, data consumers, service providers, regulators/policymakers and other stakeholders. According to IEEE standard 2144.1-2020, our proposed transaction rate controller can be viewed as a regulator/policy maker with significantly higher computing power. IoT devices with limited computing power can be categorized as data owners who add transactions in Tangle.}. Moreover, the transaction rate controller can use structural results (discussed in Sec.\ref{sec:structural-results}) to solve the PAP~\eqref{eq:pap-general} approximately.
	
	
	
	\subsection*{Truth-telling and implications for Tangle} The incentive constraint~\eqref{eq:pap-general-incentive-constraint} ensures that the mechanism is truth-telling\footnote{The proposed transaction rate control mechanism does not prevent agents from colluding, i.e., it may be advantageous for two agents to combine their computing power and act as a single agent. Preventing malicious collusion of agents is a subject of future work.}. This is achieved by maximizing the utility of each agent when they perform PoW assigned for their computing power. If~\eqref{eq:pap-general-incentive-constraint} is omitted, agents can increase their utility by choosing a PoW difficulty level assigned for different computing power. In such a case, the actual transaction rate can exceed the optimal transaction rate solved by the principal. This can lead to network congestion and delay broadcasting new transactions among all agents. 
	Therefore, \eqref{eq:pap-general-incentive-constraint} ensures that an agent with computing power $\compPower$ will be worse off in terms of its preference for PoW and WoT if it doesn't tell truth, i.e., $\utilityAgent(\pow(\compPower),\wot(\compPower),\compPower)$ is better than $\utilityAgent(\pow(\bar{\compPower}),\wot(\bar{\compPower}),\compPower)$. 

	
	\textit{Summary}.	This section formulated the transaction rate control problem in Tangle as the PAP~\eqref{eq:pap-general}; it is a mixed-integer optimization problem.
	In Sec.\ref{sec:structural-results}, we will exploit the structure of the PAP to formulate a mixed-integer optimization problem of smaller dimensions.
	
	
	\section{Structural Analysis  of 
		the Transaction Rate Control Problem}
    \label{sec:structural-results}
	In the previous section, we formulated the PAP~\eqref{eq:pap-general} for the transaction rate control problem in Tangle. The PAP~\eqref{eq:pap-general} is a mixed-integer optimization problem; the principal can obtain the optimal solution by solving $\powCardinality^\stateCardinality$ continuous optimization problems for each possible values of PoW difficulty vector $\pow$ (defined in \eqref{eq:effort-vector}). Moreover, the search space for each continuous optimization problem is in $\R^\stateCardinality$. We present two structural results on the optimal solution of the PAP~\eqref{eq:pap-general}. These structural results guarantee that the optimal WoT $\wot^*(\compPower)$ and the optimal PoW difficulty level $\pow^*(\compPower)$ are increasing in computing power $\compPower$. The structural results can be used to decrease the computation cost of solving the PAP~\eqref{eq:pap-general}.
	
	Our first result deals with the structure of the optimal PoW difficulty level $\pow^*(\compPower)$ assigned to an agent with computing power $\compPower$. We show that the optimal PoW difficulty level $\pow^*(\compPower)$ is non-decreasing in $\compPower$. Moreover, we derive the structural result only using the incentive constraints~\eqref{eq:pap-general-incentive-constraint}, which are a function of the utility of agents~\eqref{eq:agent-utility-general}. This allows the principal to choose different fairness measures $f$ in the PAP~\eqref{eq:pap-general} for controlling the transaction rate in Tangle.
	\begin{theorem}
		\label{thm:effort-monotone}
		The optimal PoW difficulty level $\pow^*(\compPower)$ assigned to an agent with computing power $\compPower$ by the PAP~\eqref{eq:pap-general} to control transaction rate is increasing in $\compPower$. For the PAP~\eqref{eq:pap-general}, this reduces the search-space for PoW difficulty vector $\pow^*$ (defined in \eqref{eq:effort-vector}) from $\powCardinality^\stateCardinality$ to $\sum_{i=1}^m \binom{n}{i}\binom{m}{i}$. Moreover, this result holds for any fairness measure $f$ in the objective~\eqref{eq:pap-general-objective}.
	\end{theorem}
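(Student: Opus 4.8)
The plan is to extract the monotonicity of $\pow^*$ from the incentive constraints~\eqref{eq:pap-general-incentive-constraint} alone, so that nothing about the objective enters; this is also precisely why the conclusion should be insensitive to the fairness measure. Read in the usual global-deviation form, \eqref{eq:pap-general-incentive-constraint} says that for all $\compPower,\bar\compPower\in\compPowerSet$ one has $\wot(\compPower)-g(\pow(\compPower),\compPower)\ge\wot(\bar\compPower)-g(\pow(\bar\compPower),\compPower)$. Fix $\compPower_1<\compPower_2$ in $\compPowerSet$ and add the constraint for the pair $(\compPower,\bar\compPower)=(\compPower_1,\compPower_2)$ to the one for $(\compPower,\bar\compPower)=(\compPower_2,\compPower_1)$. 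The WoT terms cancel, leaving an inequality in the cost function only,
\begin{equation*}
g(\pow(\compPower_2),\compPower_1)-g(\pow(\compPower_1),\compPower_1)\;\ge\;g(\pow(\compPower_2),\compPower_2)-g(\pow(\compPower_1),\compPower_2),
\end{equation*}
the two-type inequality familiar from monotone-allocation arguments in screening models.

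The second step feeds in the single-crossing structure of the cost. Since $g_\pow(\cdot,\compPower)$ is (strictly) decreasing in $\compPower$, integrating over $\pow$ shows that for any levels $a<b$ the gap $g(b,\compPower)-g(a,\compPower)$ is strictly decreasing in $\compPower$; that is, $g$ has strictly decreasing differences in $(\pow,\compPower)$. Suppose now $\pow(\compPower_1)>\pow(\compPower_2)$ and set $a=\pow(\compPower_2)<b=\pow(\compPower_1)$. Substituting into the displayed inequality and rearranging gives $g(b,\compPower_2)-g(a,\compPower_2)\ge g(b,\compPower_1)-g(a,\compPower_1)$, which contradicts strictly decreasing differences since $\compPower_1<\compPower_2$. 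Hence $\pow(\compPower_1)\le\pow(\compPower_2)$ whenever $\compPower_1<\compPower_2$ — in fact \emph{every} incentive-compatible PoW vector is non-decreasing in $\compPower$, so in particular the optimal $\pow^*$ is. Neither~\eqref{eq:pap-general-participation-constraint} nor $\fairness$ was used anywhere, which is the asserted robustness to the choice of fairness measure.

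For the search-space claim I would then simply count admissible vectors. After the reduction the principal only has to run the inner continuous program over non-decreasing vectors $\pow\in\powSet^{\stateCardinality}$; grouping these by the number $i$ of distinct PoW levels used — choose the $i$ levels among $\powCardinality$, then choose how the $\stateCardinality$ computing powers split into the corresponding consecutive blocks — yields the count $\sum_{i=1}^{m}\binom{n}{i}\binom{m}{i}$ in place of the naive $\powCardinality^{\stateCardinality}$.

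The one point that genuinely needs care is that single-crossing must be used in its strict form. If $g_\pow$ were only weakly decreasing in $\compPower$, the inequality in the second step can degenerate to an equality, and in that regime the incentive constraints pin down the agents' utilities but not the ordering of the $\pow(\compPower)$'s, so an adversarially chosen $\fairness$ could have a unique, non-monotone optimizer and the theorem would fail; one can write down a two-type, two-level example of exactly this. I would therefore make $g_\pow$ strictly decreasing in $\compPower$ an explicit standing hypothesis at the outset (the intended reading of the marginal-cost assumption). With that in hand, the two-line cancellation and the single-crossing step settle the monotonicity, and the remaining content is just the combinatorial bookkeeping for the search space.
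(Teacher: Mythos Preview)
Your argument is correct and mirrors the paper's proof exactly: add the two incentive constraints for the pair $(\compPower_1,\compPower_2)$ and its swap, cancel the WoT terms, and use the decreasing-differences property of $g$ in $(\pow,\compPower)$ to rule out $\pow(\compPower_1)>\pow(\compPower_2)$. Your added caution that the single-crossing assumption must be \emph{strict} for the contradiction step to bite is a valid refinement the paper leaves implicit.
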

        \begin{proof}
			\begin{align*}
			&\text{Consider computing power $\compPower, \bar{\compPower}\in\compPowerSet$ s.t. $\compPower<\bar{\compPower}$. Constraints~\eqref{eq:pap-specific-incentive-constraint} imply}\\
		&\wot({\compPower})-g(\pow({\compPower}),\compPower)\geq \wot(\bar{\compPower})-g(\pow(\bar{\compPower}),\compPower) ,\;\text{ and }
			-\bracketRound{\wot({\compPower})-g(\pow({\compPower}),\bar{\compPower}})\geq -\bracketRound{\wot(\bar{\compPower})-g(\pow(\bar{\compPower}),\bar{\compPower}})\\
            &\text{Adding above two inequalities yields}\;\;
			g(\pow({\compPower}),\compPower)-g(\pow(\bar{\compPower}),\compPower) \leq g(\pow({\compPower}),\bar{\compPower})-g(\pow(\bar{\compPower}),\bar{\compPower})\\
		&\text{As $g_\pow(\cdot)$ is deceasing in $\compPower$, last inequality implies}\; \pow(\compPower) \leq \pow(\bar{\compPower})
			\end{align*}
	\end{proof}	
	The formulation in Sec.\ref{sec:problem-statement} required the principal (transaction rate controller) to solve $\powCardinality^{\stateCardinality}$ continuous optimization problems corresponding to each possible value of PoW difficulty vector $\pow\in\powSet^{\stateCardinality}$ (defined in \eqref{eq:effort-vector}). By using the structural result in Theorem~\ref{thm:effort-monotone}, we can significantly reduce the number of continuous optimization problems to be solved by the principal. Specifically, Theorem~\ref{thm:effort-monotone} ensures that the principal has to solve continuous optimization problems obtained by fixing PoW difficulty vector $\pow$~\eqref{eq:effort-vector} in the PAP~\eqref{eq:pap-general} that satisfy $\pow(i)\leq\pow(j),\forall i<j$. This implies that the number of continuous optimization problems to be solved by the transaction rate controller is reduced from $m^n$ to $\sum_{i=1}^m \binom{n}{i}\binom{m}{i}$.
	
	Our second result deals with the structure of the WoT $\wot^*(\compPower)$ assigned to an agent with computing power $\compPower$. We show that $\wot^*(\compPower)$ is non-decreasing in $\compPower$. Moreover, we derive the structural result only using the incentive constraints~\eqref{eq:pap-general-incentive-constraint}, which are a function of the utility of agents~\eqref{eq:agent-utility-general}. This allows the principal to choose different fairness measures $f$ in the PAP~\eqref{eq:pap-general} for controlling the transaction rate in Tangle. 
	\begin{theorem}
		\label{thm:weight-monotone}
		The optimal WoT $\wot^*(\compPower)$ assigned to an agent with computing power $\compPower$ by the PAP~\eqref{eq:pap-general} to control transaction rate is increasing in $\compPower$. Hence, for the PAP~\eqref{eq:pap-general}, we can obtain a constrained optimal WoT vector $\wot^*$ (defined in~\eqref{eq:wot-vector}) within a class of increasing functions at a reduced computation cost. Moreover, this result holds for any fairness measure $f$ in the objective~\eqref{eq:pap-general-objective}.
	\end{theorem}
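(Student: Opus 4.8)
The plan is to follow the template of the proof of Theorem~\ref{thm:effort-monotone} and argue purely from the incentive constraints~\eqref{eq:pap-general-incentive-constraint}. Written out as in that proof, truth-telling means that every incentive-feasible pair $(\pow,\wot)$ satisfies, for all $\compPower,\bar{\compPower}\in\compPowerSet$,
\[
\wot(\compPower)-g(\pow(\compPower),\compPower)\ \geq\ \wot(\bar{\compPower})-g(\pow(\bar{\compPower}),\compPower).
\]
First I would fix $\compPower,\bar{\compPower}\in\compPowerSet$ with $\compPower<\bar{\compPower}$ and apply this inequality to the higher type $\bar{\compPower}$ contemplating the report $\compPower$, i.e., with $\bar{\compPower}$ in the role of the true type and $\compPower$ in the role of the misreport. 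Rearranging gives the key estimate
\[
\wot^*(\bar{\compPower})-\wot^*(\compPower)\ \geq\ g(\pow^*(\bar{\compPower}),\bar{\compPower})-g(\pow^*(\compPower),\bar{\compPower}),
\]
so it remains only to show the right-hand side is nonnegative.

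To do that I would invoke Theorem~\ref{thm:effort-monotone}, which gives $\pow^*(\compPower)\leq\pow^*(\bar{\compPower})$, together with the standing assumption that $g$ is increasing in its first argument (the PoW difficulty level). These two facts yield $g(\pow^*(\bar{\compPower}),\bar{\compPower})\geq g(\pow^*(\compPower),\bar{\compPower})$, hence the right-hand side of the key estimate is $\geq 0$ and $\wot^*(\bar{\compPower})\geq\wot^*(\compPower)$, which is the asserted monotonicity. Since the whole argument uses only the incentive constraints and the monotonicity of $g$ in the difficulty level, it is independent of the fairness function $\fairness$ in~\eqref{eq:pap-general-objective}; and because $\pow^*$ is already constrained to monotone schedules by Theorem~\ref{thm:effort-monotone}, the optimal WoT vector $\wot^*$ can be sought within the class of increasing functions of $\compPower$, which is the computational reduction claimed in the statement.

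The main thing to get right is the interplay with Theorem~\ref{thm:effort-monotone}: its conclusion must be applied to the very same optimal solution $(\pow^*,\wot^*)$ whose WoT we are analyzing. This is not actually an obstacle, since Theorem~\ref{thm:effort-monotone} was established for every incentive-feasible $\pow$ --- in particular the optimal one --- so $\pow^*$ is monotone and the estimate above applies to $(\pow^*,\wot^*)$ directly. A minor point worth stating, as in Theorem~\ref{thm:effort-monotone}, is the precise sense of ``increasing'': the conclusion is the weak inequality $\wot^*(\compPower)\leq\wot^*(\bar{\compPower})$ whenever $\compPower<\bar{\compPower}$. No additional structure on $g$ (convexity, single crossing, or even the participation constraints~\eqref{eq:pap-general-participation-constraint}) beyond monotonicity in its first argument is required.
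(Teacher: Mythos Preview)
Your proposal is correct and is essentially the paper's own argument: use the incentive constraint for the higher type $\bar{\compPower}$ contemplating the lower report $\compPower$, rearrange to $\wot^*(\bar{\compPower})-\wot^*(\compPower)\geq g(\pow^*(\bar{\compPower}),\bar{\compPower})-g(\pow^*(\compPower),\bar{\compPower})$, and conclude via Theorem~\ref{thm:effort-monotone} and monotonicity of $g$ in its first argument. Your write-up is in fact a bit cleaner than the paper's, which first states the IC for type $\compPower$ before jumping to the rearranged IC for $\bar{\compPower}$.
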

        \begin{proof}
			\begin{align*}
			&\text{Consider computing power $\compPower, \bar{\compPower}\in\compPowerSet$ s.t. $\compPower<\bar{\compPower}$. Constraints~\eqref{eq:pap-specific-incentive-constraint} imply}\\
			&\wot({\compPower})-g(\pow({\compPower}),\compPower)\geq \wot(\bar{\compPower})-g(\pow(\bar{\compPower}),\compPower)\\
			&\text{Using Theorem~\ref{thm:effort-monotone} and $g(\cdot)$ is increasing in effort $\pow$ imply}\\
			&\wot(\bar{\compPower})-\wot(\compPower)\geq g(\pow(\bar{\compPower}),\bar{\compPower})-g(\pow({\compPower}),\bar{\compPower})\geq 0\Rightarrow \wot(\compPower) \leq \wot(\bar{\compPower})
			\end{align*}
	\end{proof} 
	Theorem~\ref{thm:weight-monotone} can be used to parametrize the WoT $\wot(\compPower)$ within the class of increasing functions in the PAP~\eqref{eq:pap-general}. For example, we can solve for an optimal $\wot^*(\compPower)$ within the class of increasing affine functions of $\compPower$. This decreases the dimension of the search space from $\R^{\stateCardinality}$ to $\R^2$ and provides a constrained optimal solution (constrained to affine increasing functions)  at a reduced computation cost. 
	
	
	To summarize, we presented two structural results on the solution of the transaction rate control problem~\eqref{eq:pap-general} in Tangle. The first result guarantees the monotonicity of the optimal PoW difficulty level $\pow^*(\compPower)$ in computing power $\compPower$. It helps reduce the number of linear programs the principal has to solve for the transaction rate control problem~\eqref{eq:pap-general} in Tangle. The second result guarantees the monotonicity of the optimal WoT $\wot^*(\compPower)$ in computing power $\compPower$. This facilitates the principal to parametrize $\wot(\compPower)$ within a class of increasing functions and obtain a constrained optimal solution at a low computational cost. 
	\section{Numerical Results. Transaction Rate control in Tangle}
	\label{sec:simulations}
	This section illustrates, via numerical examples, our proposed
	transaction rate control mechanism for Tangle.
	We first specify the model parameters and solve the transaction rate control problem~\eqref{eq:pap-general}. Later, we utilize the solution of the PAP~\eqref{eq:pap-general} to simulate the dynamics of an actual Tangle and study the impact of our proposed transaction rate control mechanism on the average approval time of tip transactions. The main takeaway from the simulations are: 1) agents with lower computing power are assigned lower PoW difficulty levels at the cost of a larger average approval time of their transactions, 2) the fixed transaction control in~\cite{2018:SP} where WoT is a linear function of PoW cannot incentivize agents with higher computing power to do a difficult PoW. To incentivize a difficult PoW, the marginal increase in WoT should increase in PoW difficulty level. 

    \subsection*{Model for the transaction rate control problem in Tangle}
	For a demonstration of our model, we work with a specific choice of utility function~\eqref{eq:agent-utility-general} for agents:
	\begin{align}
	\label{eq:agent-utility-specific}	\utilityAgent(\wot(\compPower),\pow(\compPower),\compPower) := \beta\wot({\compPower})-\frac{\exp(\pow({\compPower}))}{\compPower}
	\end{align}
	Here, $\beta\in\R^+$ is a parameter that tunes agents' preference between WoT and PoW. The first term $\beta\wot(\compPower)$ models agents' preference for higher WoT, and the second term $\frac{\exp(\pow({\compPower}))}{\compPower}$ models preference for low PoW $\pow(\compPower)$ by agents. First term $\beta\wot(\compPower)$ is linear in WoT as the probability of selection of tip nodes for approval is directly proportional to WoT (see Tangle protocol in Sec.\ref{sec:tangle-protocol}). Rate of new transaction $\frac{\compPower}{\exp{\pow(\compPower)}}$ is an increasing function of the computing power $\compPower$; it decreases as exponential of the PoW difficulty level $\pow(\compPower)$. This is because PoW difficulty $\pow(\compPower)$ corresponds to a search for hash code starting with $\pow(\compPower)$ number of zeros. Assuming each hash code is equally probable, the probability of finding a hash code starting with $\pow(\compPower)$ number of zeros decreases exponentially with $\pow(\compPower)$. Hence, the cost is an exponential function of $\pow(\compPower)$.

    For simulations, we use the following PAP for the transaction control problem in Tangle with utility function for agents defined in~\eqref{eq:agent-utility-specific}:
    \begin{subequations}
		\label{eq:pap-specific}
		\begin{align}
		\label{eq:pap-specific-objective}
		&\min_{\substack{\pow(\compPower)\in\powSet,\\\wot(\compPower)\in\wotSet,\\\forall \compPower\in \compPowerSet,\\ \wot(1)=1}} \sum_{\compPower\in \compPowerSet}\prob(\compPower)\bracketSquare{\totalAgents\,\rate{\compPower}{\compPower} +\alpha \wot(\compPower))}\\
		\label{eq:pap-specific-incentive-constraint}
		\text{s.t.}\quad &\compPower=\arg\max_{{\bar{\compPower}\in \compPowerSet}} \beta\wot(\bar{\compPower})-\frac{1}{\rate{\compPower}{\bar{\compPower}}},\,\forall \compPower\in \compPowerSet\\
		\label{eq:pap-specific-participation-constraint}
		&\beta\wot(\compPower)-\frac{1}{\rate{\compPower}{\bar{\compPower}}}\geq u_0,\forall \compPower\in \compPowerSet\\
		\intertext{where,}
		&\rate{\compPower}{\bar{\compPower}}:=\frac{\compPower}{\exp(\pow(\bar{\compPower}))}
		\end{align}
	\end{subequations}
	        
	Here, the non-negative parameter  $\alpha\in\R^+$  tunes the fairness measure in objective~\eqref{eq:pap-specific-objective}
    The objective function~\eqref{eq:pap-specific-objective} ensures a trade-off between the rate of new transactions and WoT assigned to different agents. $\rate{\compPower}{\bar{\compPower}}$ denotes the rate at which an agent with computing power $\compPower$ can add new transactions if it claims its computing power to be $\bar{\compPower}$. The first term $\totalAgents\rate{\compPower}{\compPower}$ controls the rate of new transactions by adjusting the PoW difficulty level for each agent. This assumes that agents are truthful; truth-telling is ensured through incentive constraint~\eqref{eq:pap-specific-incentive-constraint} (discussed later). The rate of new transactions is directly proportional to the number of agents $\totalAgents$ and the computing power $\compPower$ available to them. It is inversely proportional to the exponential of PoW difficulty level $\pow(\compPower)$. This is because PoW difficulty $\pow(\compPower)$ corresponds to a search for hash code starting with $\pow(\compPower)$ number of zeros. Assuming each hash code is equally probable to occur, the probability of finding a hash code starting with $\pow(\compPower)$ number of zeros is proportional to $\frac{\compPower}{\pow(\compPower)}$. Therefore, minimizing the first term assigns higher PoW to agents and controls transaction rate. Minimizing the second term $\alpha \wot(\compPower)$ ensures that agents are assigned similar WoT. This ensures fairness amongst agents with different computing power. This is because the probability of a tip transaction being selected for approval by a new transaction is proportional to its WoT (see Tangle protocol in Sec.\ref{sec:tangle-protocol}). Hence,  $\wot(\compPower)$ affects the average time before a tip transaction gets approved. As the importance of WoT is only relative, we normalize the WoT with respect to the WoT of the agent with the smallest computing power. This is achieved by adding a constraint $\wot(1)=1$. The participation constraint~\eqref{eq:pap-specific-participation-constraint} guarantees a base utility level for all agents; otherwise, agents would opt out of the distributed ledger, i.e., they would not use Tangle to store their transactions.

    \subsection*{Simulation setup for the transaction rate control in Tangle}
    The PAP~\eqref{eq:pap-general} is a mixed-integer optimization problem. 
    For a fixed effort $\pow$ (defined in~\eqref{eq:effort-vector}), the PAP~\eqref{eq:pap-specific} is a linear optimization program and can be solved efficiently using standard solvers.
 The principal chooses the optimal PoW difficulty level $\pow^*\in\powSet^{\stateCardinality}$ that maximizes its utility. We implement the PAP~\eqref{eq:pap-specific} in MATLAB and use the inbuilt optimization toolbox to obtain the optimal WoT vector $\wot$ (defined in \eqref{eq:wot-vector}) and the optimal PoW vector $\pow$ (defined in \eqref{eq:effort-vector}). We also simulate the dynamics of Tangle~\cite{2018:SP} (see Sec.\ref{sec:tangle-protocol} for the Tangle protocol) in MATLAB and compute the average approval time of transactions.
	
 We begin with a simulation of the PAP~\eqref{eq:pap-specific} to compute the optimal PoW difficulty level and WoT for agents. The model parameters are listed in Table~\ref{tab:model-parameters}. We solve the PAP~\eqref{eq:pap-specific} for four different values of the number of agents $\totalAgents$. As the PoW difficulty level exponentially affects the objective function~\eqref{eq:pap-specific-objective}, we simulate for $\totalAgents=\{100,1000,10000,100000\}$ to observe a noticeable change in PoW difficulty level. The chosen number of agents $\totalAgents$ is large enough to model a realistic scenario. We use Theorem~\ref{thm:effort-monotone} to reduce the search space for the PoW vector $\pow$ (defined in~\eqref{eq:effort-vector}). Specifically, we solve the PAP~\eqref{eq:pap-specific} for those $\pow$ that satisfy $\pow(i)\leq\pow(j),\forall i<j$. We do not use Theorem~\ref{thm:weight-monotone} to parametrize the search space for the WoT vector $\wot$ (defined in~\eqref{eq:wot-vector}) and solve the PAP~\eqref{eq:pap-specific} exactly. 

    \subsection*{Results and Discussion}
        Our first simulation evaluates the PoW difficulty level $\pow(\compPower)$ assigned to an agent. The optimal PoW difficulty level $\pow(\compPower)$ for each agent vs. the number of agents $\totalAgents$ is plotted in Fig.~\ref{fig:pow}. For a fixed value of $\totalAgents$, $\pow(\compPower)$ increases with $\compPower$ (Theorem~\ref{thm:effort-monotone}). As $\totalAgents$ increases, $\pow(\compPower)$ increases for all agents as the transaction rate is directly proportional to $\totalAgents$. Also, $\pow(\compPower)$ is a concave function of $\totalAgents$.
    Therefore, the marginal increase in PoW difficulty level decreases with $\totalAgents$. This implies that our proposed transaction rate control mechanism does not incur a substantial increase in computation cost to agents with an increase in application size (number of agents). Therefore, the transaction rate control mechanism~\eqref{eq:pap-specific} is suitable for IoT applications.
	
	\def\arraystretch{1.5}
	\begin{table}
		\centering
		\caption{Simulation Parameters for Transaction Rate control Problem~\eqref{eq:pap-specific}}
		\begin{tabular}{r|r|l}
			\hline
			\textbf{Parameters} & \textbf{Eq.} & \textbf{Value}\\
			\hline
			$\compPowerSet$ & \eqref{eq:pap-specific-objective} & $\{1, 3, 10\}$\\
			$\powSet$ & \eqref{eq:pap-specific-objective} & $\{1,2,\ldots,12\}$\\
			$p=(p(\compPower))_{\compPower\in\compPowerSet}$ & \eqref{eq:pap-specific-objective} & $[1/3\,\, 1/3\,\, 1/3]$\\
			$\alpha$ & \eqref{eq:pap-specific-objective} & 0.1\\
			$\beta$ & \eqref{eq:pap-specific-incentive-constraint} & 80\\
			$u_0$ & \eqref{eq:pap-specific-participation-constraint} & 10\\
			\hline
		\end{tabular}
		\label{tab:model-parameters}
	\end{table}
	
	\begin{figure}
		\centering
		\includegraphics[scale=0.4]{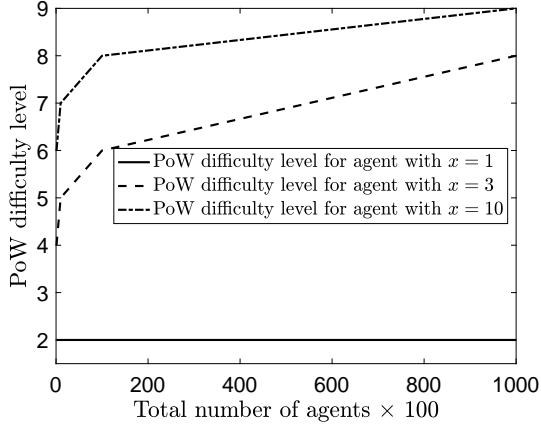}
		\caption{PoW difficulty level $\pow(\compPower)$ vs. number of agents $\totalAgents$ for our proposed transction rate control mechanism~\eqref{eq:pap-specific}. As $\totalAgents$ increases, $\pow(\compPower)$ increases for all agents but the
			marginal increase in PoW difficulty level decreases with $\totalAgents$. So, our proposed transaction rate control mechanism does not incur a substantial increase in computation cost to agents with an increase in the number of agents.}
		\label{fig:pow}
	\end{figure}
	
	Our next simulation evaluates the WoT $\wot(\compPower)$ assigned to an agent. Fig.~\ref{fig:weight} displays the variation of WoT $\wot(\compPower)$ for each agent vs.\ number of agents $\totalAgents$. 
 The simulation shows that $\wot(\compPower)$ increases with $\compPower$ (Theorem~\ref{thm:weight-monotone}). 
	Moreover, the WoT $\wot(\compPower)$ increases with $\totalAgents$ to compensate for the increase in PoW difficulty level. Also, the marginal increase in  WoT decreases with $\totalAgents$. Hence,  our proposed transaction rate control mechanism~\eqref{eq:pap-specific} ensures that the average approval time for agents with small computing power does not degrade rapidly with the number of agents.
	\begin{figure}
		\centering
		\includegraphics[scale=0.4]{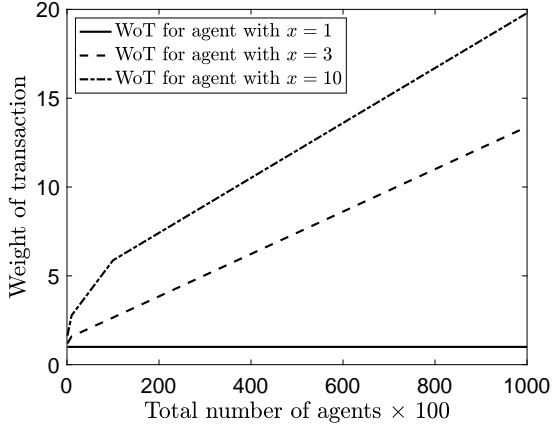}
		\caption{WoT $\wot(\compPower)$ vs. number of agents $\totalAgents$ for our proposed transaction rate control mechanism~\eqref{eq:pap-specific}. WoT $\wot(\compPower)$ increases with $\totalAgents$ to compensate for the increase in PoW difficulty level. Marginal increase in WoT decreases with $\totalAgents$. Hence, our proposed transaction rate control mechanism~\eqref{eq:pap-specific} ensures that the relative WoT of agents with small computing power (with respect to the WoT of agents with high computing power) does not degrade rapidly with the number of agents.}
		\label{fig:weight}
	\end{figure}
	
	We now use the optimal PoW difficulty level $\pow^*(\compPower)$ and optimal WoT $\wot^*(\compPower)$, obtained from~\eqref{eq:pap-specific}, to simulate the dynamics of Tangle. Tangle protocol and its evolution have been described in Sec.~\ref{sec:tangle-protocol}. At each time $t$, new transactions are added by each agent at a rate that depends on their computing power and the PoW difficulty level. Each new transaction chooses two tip transactions randomly for approval; the probability that a tip transaction is selected for approval is proportional to its weight. As different transactions have different weights, we use the accept-reject method~\cite{2016:VK} to select tip transactions non-uniformly during simulations. An important parameter associated with the dynamics of Tangle is the average approval time of transactions. The average approval time of a transaction is defined as the time between when a transaction is added to the ledger and when it gets approved by a new transaction. The average approval time of a transaction decreases with the WoT and increases with the transaction rate per number of agents\footnote{As the number of agents increases, so does the number of tip transactions; therefore, the approval time is an increasing function of the transaction rate divided by the number of agents.}. The probability that a tip transaction is selected for approval at time $t$ is proportional to the WoT. Therefore, the more the WoT, the higher its chance of being selected for approval. Also, if the PoW difficulty level increases, then the rate of new transactions decreases. Consequently, the number of tip transactions selected for approval also decreases. This leads to an increase in average approval time.
	\begin{figure}
		\centering
		\includegraphics[scale=0.4]{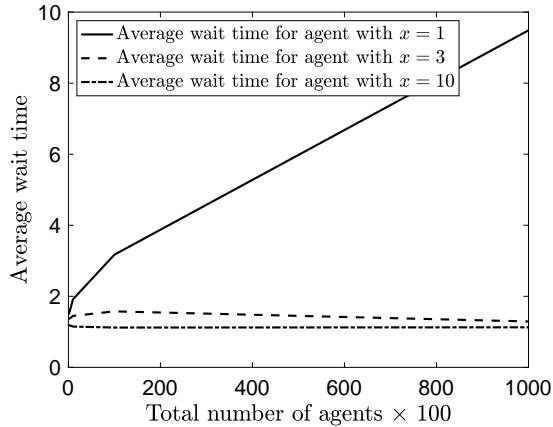}
		\caption{Average approval time of transactions vs. the number of agents $\totalAgents$ for our proposed transaction rate control mechanism solved by the PAP~\eqref{eq:pap-specific}. Agents assigned difficult PoW have to wait for a lesser time on average for approval of their tip transactions. This is because they are assigned a higher WoT. Agents assigned lower PoW difficulty levels have to wait for longer on average for approval of their transactions. This is because they are assigned a lower WoT.
  Compared to the fixed transaction rate control scheme in~\cite{2018:SP}, our proposed transaction rate control mechanism~\eqref{eq:pap-specific} achieves two benefits: (1) it moderates the rate of new transactions and increases the security of Tangle by incentivizing agents with high computing power to perform difficult PoW, (2) it allows agents with lower computing power to perform easier PoW at the expense of larger average approval time of transactions.
  }
		\label{fig:wait_time}
	\end{figure}
	
 Fig.~\ref{fig:wait_time} plots the average time for approval of transactions for each agent vs. the number of agents $\totalAgents$. As $\wot(1),\pow(1)$ remains unchanged with $\totalAgents$ (refer Fig.~\ref{fig:pow} and Fig.~\ref{fig:weight}), the average approval time for the agent with the lowest computing power increase with $\totalAgents$. This is because both the relative WoT (with respect to other agents) decreases, and so does the transaction rate per number of agents. Agents that are assigned difficult PoW have to wait for a lesser amount of time on average for approval of their tip transactions than agents that are assigned easier PoW. As observed in Fig.~\ref{fig:wait_time}, the average approval time for agents with higher computing power is almost constant because a decrease in the transaction rate per number of agents offsets an increase in relative WoT. 
 Compared to the fixed transaction rate control
scheme in~\cite{2018:SP}, our proposed transaction rate control mechanism~\eqref{eq:pap-specific} achieves two benefits: (1) it moderates
the rate of new transactions and increases the security of Tangle by incentivizing agents with high computing
power to perform difficult PoW, (2) it allows agents with lower computing power to perform easier PoW at
the expense of a larger average approval time of transactions.
	
	\begin{figure}
		\centering
		\includegraphics[scale=0.4]{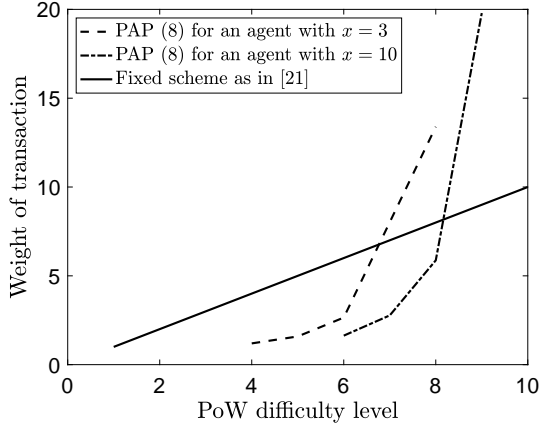}
		\caption{Comparison of WoT vs. PoW assigned to agents by our proposed transaction rate control mechanism solved by the PAP~\eqref{eq:pap-specific}, and the fixed transaction control scheme as in~\cite{2018:SP}. 
  The fixed scheme in \cite{2018:SP} assigns WoT as a linear function of the PoW difficulty level; this does not guarantee that agents will perform a difficult PoW. Our transaction rate control mechanism~\eqref{eq:pap-specific} assigns WoT as a convex function of PoW difficulty level; this incentivizes agents to perform difficult PoW. Transactions with higher PoW difficulty level makes it
difficult for an adversary to tamper transactions in Tangle; this makes Tangle more secure.
  } 
  \label{fig:comparison}
	\end{figure}
    	We now compare the proposed transaction rate control mechanism~\eqref{eq:pap-specific} with the fixed transaction control scheme as in~\cite{2018:SP}: it allocates the WoT as a fixed linear function of the PoW difficulty level. 
     Fig.~\ref{fig:comparison} plots WoT vs. PoW for agents with computing power $x=3$ and $x=10$ obtained from PAP~\eqref{eq:pap-specific}; it is compared with the fixed transaction control scheme as in~\cite{2018:SP}. The fixed scheme in~\cite{2018:SP} assigns WoT as a linear function of the PoW difficulty level; this does not guarantee that agents will perform a difficult PoW. Our transaction rate control mechanism~\eqref{eq:pap-specific} assigns WoT as a convex function of PoW difficulty level; this incentivizes agents to perform difficult PoW. Transactions with higher PoW difficulty level makes it
difficult for an adversary to tamper with transactions in Tangle; this makes Tangle more secure.
  Hence, to incentivize difficult PoW, the marginal increase in WoT should increase with the PoW difficulty level.
     
	To summarize, we simulated the PAP~\eqref{eq:pap-specific} for controlling the transaction rate in Tangle. As the PAP~\eqref{eq:pap-specific} is a mixed-integer program, we obtained the optimal solution by solving a set of linear programs. Theorem~\ref{thm:effort-monotone} was exploited to reduce the number of linear programs to be solved. We also simulated the dynamics of Tangle after incorporating the transaction rate control mechanism~\eqref{eq:pap-specific} and compared it with the fixed transaction rate control scheme in~\cite{2018:SP}.
	
	\section{Conclusion and Future Work}
	\label{sec:conclusion}
	Tangle is a distributed ledger technology suitable for IoT applications. Motivated by designing strategic contracts with partial information in microeconomics, this paper has proposed a principal-agent problem (PAP) approach to transaction rate control in Tangle. The principal (transaction rate controller) designs a mechanism to assign proof of work (PoW) difficulty level and weight of transaction (WoT) to agents (IoT devices). As the principal cannot observe agents' state (computing power), the principal also has to incentivize agents to be truthful. Our main results regarding the proposed transaction rate controller were the following: 1) the optimal PoW difficulty level increases with the computing power of the agents; 2) the optimal WoT increases with the computing power of the agent. We also simulated the dynamics of Tangle using the solution obtained from our proposed transaction rate control mechanism. We observed that agents with higher computing power are assigned higher PoW difficulty levels but have a smaller average transaction approval time than agents with lower computing power. As the mechanism is truth-telling, it incentivizes agents with higher computing power to perform a difficult PoW; this makes Tangle more secure. Finally, we compared the proposed mechanism with the transaction rate control scheme from the white paper on Tangle~\cite{2018:SP}; we observed that a concave relation between WoT and PoW is required to incentivize the agents to be truthful.
	
	Our transaction control mechanism ensures that agents are truth-telling. Still, it does not prevent agents from colluding, i.e.,  it may be advantageous for multiple agents to combine their computing power and act as a single agent. It would, therefore, be interesting to study the rate control problem that disincentivizes agents from forming coalitions. This is essential from the security viewpoint, as collusion of agents can lead to a majority attack on distributed ledgers.
	
	\begin{acks}
		This research was supported in part by the U.S.\ Army Research Office grant  
		W911NF-21-1-0093, National Science Foundation grant CCF-2112457, and Air Force Office of Scientific Research grant FA9550-22-1-0016.
	\end{acks}
	
	\bibliographystyle{ACM-Reference-Format}
	\bibliography{anurag}

\end{document}